\renewenvironment{proof}{\par\noindent\textit{Proof.} }{\hfill$\square$\par}
\title{\bf On Brain as a Mathematical Manifold:\
Neural Manifolds, Sheaf Semantics, and Leibnizian Harmony}
\author{\Large Takao Inou\'{e}}
\affil{\large Faculty of Informatics, Yamato University, \\ Osaka, Japan\footnote{Email: inoue.takao@yamato-u.ac.jp; \\ Personal Email: takaoapple@gmail.com \\ [I prefer my personal email address for correspondence.]}} 
\date{January 16, 2026. Preprint Version 1.1\footnote{I am still in the state of trial and error for this paper. This is the very first version. However this paper gives a consistent perspective for the brain pathology and certain interpretation for the thought of Leibniz's monadology \cite{Leibniz1714} from a point of view of sheaf theory, although it is right or not. This manuscript is posted as a preprint to establish priority.}}
\newtheorem{theorem}{Theorem}[section]
\newtheorem{definition}{Definition}[section]
\newtheorem{remark}{Remark}[section]
\begin{document}
\maketitle

\begin{abstract}
We present a mathematical and philosophical framework in which brain function is modeled using sheaf theory over neural state spaces. Local neural or cognitive functions are represented as sections of a sheaf, while global coherence corresponds to the existence of global sections. Brain pathologies are interpreted as obstructions to such global integration and are classified using tools from sheaf cohomology. The framework builds on the neural manifold program in contemporary neuroscience and on standard results in sheaf theory, and is further interpreted through a Leibnizian lens \cite{Churchland2012, Leibniz1714,  MacLaneMoerdijk, Perich2025}. This paper is intended as a conceptual and formal proposal rather than a complete empirical theory.\\
\smallskip

\noindent MSC 2020: 18F20, 03G30
\end{abstract}

\bigskip

\noindent \textit{Keywords}: brain, mathematical manifold, neural manifold, sheaf, sheaf cohomonology, brain pathology, pathology as cohomological obstruction, Leibniz's monadology, monad, gluing function, Leibnizian pre-established harmony, aphasia, agnosia, schizophrenia.
\tableofcontents

\section{Introduction}
Recent advances in systems neuroscience suggest that population-level neural activity often concentrates on low-dimensional manifolds embedded in high-dimensional neural state spaces. This \emph{neural manifold} perspective has proven fruitful in motor control, perception, and cognition. However, from a philosophical standpoint, the precise conceptual status of such manifolds remains underdetermined: Are they merely empirical summaries, or do they possess a deeper structural and semantic role in theories of mind and brain?

The present paper argues for the latter. We propose a mathematically explicit and philosophically interpretable framework in which brain function is understood through sheaf theory on neural manifolds. Our central claim is organized by a clear \emph{A $\rightarrow$ B} structure. (A) Locally defined neural or cognitive functions are modeled as sections of a sheaf over open subsets of a brain-related space. (B) Globally coherent brain function—such as unified perception, action, or conscious experience—is identified with the existence of a global section obtained via sheaf-theoretic gluing.

This formulation allows us to make precise sense of integration and its failures. In particular, when local functional coherence does not extend globally, the obstruction is not merely empirical but structural, and is naturally classified by sheaf cohomology. On this view, brain pathologies are interpreted as mathematically identifiable failures of local-to-global integration.

Philosophically, this framework offers a novel bridge between contemporary neuroscience and classical rationalist metaphysics. Drawing on Leibniz's notion of monads and pre-established harmony, we interpret local sections as partial perspectives and the sheaf gluing axiom as a formal counterpart of harmony without direct causal interaction. This is not intended as historical exegesis, but as a structural analogy that clarifies the conceptual stakes of global coherence in cognitive systems.

The contribution of this paper is therefore threefold: (i) it provides a rigorous semantic interpretation of neural manifolds using sheaf theory; (ii) it offers a cohomological classification of brain pathology as structural obstruction; and (iii) it situates these results within a broader philosophical context suitable for interdisciplinary dialogue in philosophy of science and philosophy of mind.

This paper is intended as a conceptual and formal proposal rather than a complete empirical theory.

\section{Related Work}

The neural manifold hypothesis has been developed extensively in systems neuroscience (cf. \cite{chung2021, Perich2025b, Perich2025c, Safaie2023}). Seminal work by Churchland et al. \cite{Churchland2012} demonstrated low-dimensional population dynamics in motor cortex, while Gallego et al. \cite{Gallego2017} emphasized manifolds as task-relevant control spaces. Perich et al. \cite{Perich2025} further generalized this viewpoint, arguing that manifold structure underlies flexible cognition beyond motor control.

Topological and geometric approaches to neural data analysis—including persistent homology and information geometry—have also gained prominence. However, these approaches typically lack an explicit local-to-global semantic framework. Sheaf-theoretic methods, previously applied to sensor fusion and distributed systems, provide precisely such a framework. The present work extends these ideas by introducing a sheaf semantics of brain function and pathology, enriched by a Leibnizian philosophical interpretation.

\section{Brain-Related Spaces and Neural Manifolds}

\begin{definition}[Brain-Related Space]
A \emph{brain-related space} $\mathcal{B}$ is a topological space whose points represent neural configurations or cognitive states, and whose open sets correspond to functionally coherent neural subsystems.
\end{definition}

\begin{definition}[Neural Sheaf]
Let $\mathcal{B}$ be a brain-related space. A \emph{neural sheaf} $\mathcal{F}$ assigns to each open set $U \subseteq \mathcal{B}$ a set $\mathcal{F}(U)$ of neural or cognitive functions observable on $U$, together with restriction maps satisfying the sheaf axioms.
\end{definition}

\section{Sheaf Semantics of Brain Function}

\begin{theorem}[Global Brain Function]
\footnote{For the local-to-global principle formalized here, see standard treatments of sheaf theory such as Mac~Lane and Moerdijk (1992, Chapter~III) \cite{MacLaneMoerdijk} or Kashiwara and Schapira (2006, Section~2) \cite{KashiwaraSchapira}.}
A globally coherent brain function corresponds to the existence of a global section
$$
s \in \mathcal{F}(\mathcal{B}).
$$
\end{theorem}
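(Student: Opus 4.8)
The plan is to read this statement as an interpretive identification whose mathematical core is the local-to-global gluing property already built into the definition of a neural sheaf, and then to make that core fully explicit. First I would fix an open cover $\{U_i\}_{i \in I}$ of $\mathcal{B}$ by functionally coherent neural subsystems; such a cover exists because, by the definition of a brain-related space, the open sets are precisely these subsystems and they cover $\mathcal{B}$. I would then formalize the phrase \emph{globally coherent brain function} as a family of local functions $s_i \in \mathcal{F}(U_i)$ that agree on overlaps, i.e.\ $s_i|_{U_i \cap U_j} = s_j|_{U_i \cap U_j}$ for all $i, j \in I$, reading each $s_i$ as the functional behavior observable on the subsystem $U_i$ and the overlap conditions as the requirement that distinct subsystems report mutually consistent behavior wherever they share states.

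The central step is a direct appeal to the sheaf axioms carried by $\mathcal{F}$. The gluing axiom supplies a section $s \in \mathcal{F}(\mathcal{B})$ with $s|_{U_i} = s_i$ for every $i$, and the separation (locality) axiom guarantees that this $s$ is unique. This $s$ is the global section whose existence the statement asserts, and I would interpret it as the unified brain function that simultaneously realizes all of the local behaviors. To make the correspondence genuinely biconditional rather than one-sided, I would also treat the converse: starting from a global section $s \in \mathcal{F}(\mathcal{B})$, I would restrict along the cover to obtain $s_i := s|_{U_i}$ and observe that these restrictions are automatically compatible on overlaps by the presheaf composition law for restriction maps, thereby recovering a globally coherent family.

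The main obstacle is not technical but conceptual, and it lies in the bridge between the informal notion of coherence and its formal surrogate. The gluing and separation axioms do all the mathematical work once overlap-compatibility is granted, so the real burden is to argue that \emph{globally coherent brain function} is faithfully and without loss captured by exactly this condition --- that is, that there is no residue of global integration invisible to all pairwise subsystem comparisons. I would discharge this by making the formalization stipulative at the level of the definitions of a brain-related space and a neural sheaf, so that coherence is \emph{defined} to mean overlap-compatibility, at which point the theorem reduces to unwinding those definitions and invoking the sheaf property. A secondary, genuinely mathematical caveat I would flag is that the construction appears to depend on the chosen cover; I would close this gap by noting that, precisely because $\mathcal{F}$ is a sheaf and not merely a presheaf, the glued section is independent of the cover, so the identification is well posed.
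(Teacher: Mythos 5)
Your proposal is correct and follows essentially the same route as the paper's own proof: formalize global coherence as overlap-compatibility of a family $s_i \in \mathcal{F}(U_i)$ over an open cover, invoke the gluing axiom to obtain a unique global section, and establish the converse by restricting a global section back to the cover. Your additional remarks --- explicitly separating the locality axiom as the source of uniqueness, flagging the stipulative character of the coherence-as-compatibility identification, and noting cover-independence --- are refinements the paper leaves implicit, but they do not change the argument.
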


\begin{proof}
We provide a detailed argument based on the axioms of sheaf theory.

Let $\{U_i\}_{i\in I}$ be an arbitrary open cover of $\mathcal{B}$. For each $i$, assume that $s_i \in \mathcal{F}(U_i)$ represents a locally coherent neural or cognitive function. Global coherence of brain function means that these local functions are mutually consistent wherever their domains overlap. Formally, this requires that
$$
s_i|_{U_i \cap U_j} = s_j|_{U_i \cap U_j} \quad \text{for all } i,j \in I.
$$

By the gluing axiom of sheaf theory, the existence of such a compatible family of local sections implies the existence of a unique global section $s \in \mathcal{F}(\mathcal{B})$ satisfying $s|_{U_i} = s_i$ for all $i$. This global section represents a single integrated functional state extending across all neural subsystems.

Conversely, if a global section $s \in \mathcal{F}(\mathcal{B})$ exists, then its restrictions $s|_{U_i}$ automatically define a compatible family of local sections. Hence, the existence of a global section is both necessary and sufficient for global coherence.
\end{proof}

\begin{remark}[Standard Sheaf-Theoretic Fact]
The local-to-global correspondence used in the above proof is a standard consequence of the sheaf gluing axiom. See, for example, Mac Lane and Moerdijk, \emph{Sheaves in Geometry and Logic}, Chapter II \cite{MacLaneMoerdijk}, or Kashiwara and Schapira, \emph{Categories and Sheaves}, Section 2.3 \cite{KashiwaraSchapira}. The present contribution lies not in this mathematical fact itself, but in its interpretation as a model of global brain function.
\end{remark}

\section{Pathology as Cohomological Obstruction}
The passage from locally defined sections on an open cover to a global section is mediated by compatibility conditions on overlaps. When these conditions fail, the resulting \v{C}ech cocycle provides an explicit representation of pathological obstruction.

\begin{theorem}[Pathological Obstruction]
\footnote{This argument follows the standard \v{C}ech cohomological description of obstructions to gluing; see Bott and Tu (1982, Chapter~9) \cite{BottTu} .}
Let $\mathcal{F}$ be a neural sheaf over a brain-related space $\mathcal{B}$. If locally defined sections fail to glue to a global section, then the obstruction is represented by a nontrivial cohomology class in $H^1(\mathcal{B},\mathcal{F})$.
\end{theorem}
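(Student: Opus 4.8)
The plan is to convert the failure of the gluing condition from the Global Brain Function theorem into an explicit \v{C}ech $1$-cocycle and then identify its cohomology class as the obstruction. Throughout I will assume that $\mathcal{F}$ is a sheaf of abelian groups, since the cohomology $H^1(\mathcal{B},\mathcal{F})$ requires an algebraic structure on the stalks; the purely set-valued neural sheaf of the preceding definition must first be enriched for the statement to carry content. I fix an open cover $\mathcal{U} = \{U_i\}_{i \in I}$ on which the local functions $s_i \in \mathcal{F}(U_i)$ are defined, and form the \v{C}ech cochain complex $\check{C}^\bullet(\mathcal{U},\mathcal{F})$ with the usual alternating differential $\delta$. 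The strategy has three stages: (i) build a $1$-cocycle measuring the overlap discrepancies, (ii) show that a global section exists precisely when this cocycle is a coboundary, and (iii) pass from the cover-dependent group $\check{H}^1(\mathcal{U},\mathcal{F})$ to $H^1(\mathcal{B},\mathcal{F})$.

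For stage (i), I write $U_{ij} = U_i \cap U_j$ and set
$$ g_{ij} \;=\; s_i|_{U_{ij}} - s_j|_{U_{ij}} \;\in\; \mathcal{F}(U_{ij}). $$
A direct telescoping computation on triple overlaps $U_{ijk}$ gives $g_{ij} - g_{ik} + g_{jk} = 0$, so $g = (g_{ij})$ is a \v{C}ech $1$-cocycle, and its class $[g] \in \check{H}^1(\mathcal{U},\mathcal{F})$ is the candidate obstruction. I would then check that it is independent of admissible modifications of the local representatives: replacing $s_i$ by $s_i + t_i$ alters $g$ by $\delta t$ and hence leaves $[g]$ unchanged.

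The main obstacle lies precisely here, and it is conceptual rather than computational. As written, $g = \delta s$ for $s = (s_i) \in \check{C}^0(\mathcal{U},\mathcal{F})$, so the naive discrepancy cocycle is automatically a coboundary and its class vanishes identically---which would wrongly assert that gluing never obstructs. The resolution, and the real content of the theorem, is that the $s_i$ must not be regarded as free elements of $\mathcal{F}(U_i)$ to be subtracted, but as local lifts of a single constrained datum. Concretely, I would situate the problem in a short exact sequence of sheaves
$$ 0 \longrightarrow \mathcal{K} \longrightarrow \mathcal{F} \xrightarrow{\;\pi\;} \mathcal{Q} \longrightarrow 0, $$
where a global target $q \in \mathcal{Q}(\mathcal{B})$ (the \emph{function to be realized}) lifts only locally to $s_i$ with $\pi(s_i) = q|_{U_i}$, so that the discrepancies $g_{ij}$ land in the subsheaf $\mathcal{K}$ and $[g]$ lives in $H^1(\mathcal{B},\mathcal{K})$ via the connecting homomorphism $\partial \colon H^0(\mathcal{B},\mathcal{Q}) \to H^1(\mathcal{B},\mathcal{K})$. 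Equivalently, one may model the sheaf of local solutions as an $\mathcal{F}$-torsor, whose isomorphism class is exactly an element of $H^1(\mathcal{B},\mathcal{F})$ and which admits a global section if and only if it is trivial. Establishing this torsor / connecting-map reformulation is the step on which the whole argument turns.

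Finally, for stage (iii) and the conclusion, the long exact cohomology sequence (or the torsor classification) yields the desired equivalence: a compatible global lift exists if and only if $[g] = 0$. Taking the contrapositive gives exactly the theorem, since a genuine gluing failure forces $[g] \neq 0$. To land in honest sheaf cohomology $H^1(\mathcal{B},-)$ rather than the cover-dependent $\check{H}^1(\mathcal{U},-)$, I would either assume $\mathcal{U}$ is acyclic and invoke Leray's comparison theorem, or pass to the colimit over refinements, using the canonical map $\check{H}^1(\mathcal{U},\mathcal{F}) \to H^1(\mathcal{B},\mathcal{F})$, which is injective in degree one for any cover. The interpretive payoff---reading $[g]$ as the cohomological signature of a pathology---then follows directly, exactly as in the \v{C}ech obstruction picture of Bott and Tu \cite{BottTu}.
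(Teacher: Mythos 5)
Your proposal follows the same \v{C}ech skeleton as the paper's proof --- define the discrepancy cochain $g_{ij}=s_i|_{U_{ij}}-s_j|_{U_{ij}}$, verify the cocycle identity on triple overlaps, and read off a class in degree one --- but it then diverges in a way that matters. The paper stops at the naive construction: it asserts that $(c_{ij})$ defines a class $[c]\in H^1(\mathcal{B},\mathcal{F})$ whose nontriviality measures the gluing failure. You correctly observe that this cannot be the whole story: by construction $g=\delta s$ for the $0$-cochain $s=(s_i)$, so $[g]$ vanishes identically in $\check{H}^1(\mathcal{U},\mathcal{F})$, and the paper's criterion, read literally, would say that gluing never obstructs (indeed, ``modifying'' each $s_i$ by itself always yields the compatible zero family). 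Your repair --- treating the $s_i$ as local lifts of a fixed global datum $q$ along a surjection $\pi\colon\mathcal{F}\to\mathcal{Q}$ with kernel $\mathcal{K}$, so that the discrepancies land in $\mathcal{K}$ and the obstruction is $\partial(q)\in H^1(\mathcal{B},\mathcal{K})$ under the connecting map, or equivalently the torsor class of the sheaf of local solutions --- is the standard and correct way to obtain a genuinely nontrivial obstruction, and it is what the cited Bott--Tu treatment actually does. Your further points (that a set-valued neural sheaf must be enriched to abelian groups, or handled via non-abelian $H^1$ of torsors, before the statement has content; and that one must pass from $\check{H}^1(\mathcal{U},-)$ to $H^1(\mathcal{B},-)$ via Leray or the colimit over refinements, using injectivity in degree one) are both needed and both absent from the paper. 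In short, your route is not the paper's route but a corrected version of it; the one step you leave unexecuted, the torsor/connecting-map reformulation, is standard, yet it carries the entire mathematical weight, and the theorem as stated only becomes true once reinterpreted in your framework, with the obstruction living in $H^1(\mathcal{B},\mathcal{K})$ for the relevant coefficient sheaf rather than in $H^1(\mathcal{B},\mathcal{F})$ for the ambient one.
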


\begin{proof}
Let $\{U_i\}$ be an open cover of $\mathcal{B}$ and $s_i \in \mathcal{F}(U_i)$ local sections. Define a \v{C}ech 1-cochain
$$
c_{ij} = s_i|_{U_i\cap U_j} - s_j|_{U_i\cap U_j}.
$$

If the family ${s_i}$ were globally compatible, then $c_{ij}=0$ for all $i,j$. In general, the collection ${c_{ij}}$ satisfies the cocycle condition on triple overlaps, hence defines a cohomology class $[c] \in H^1(\mathcal{B},\mathcal{F})$.

If this class were trivial, then $c$ would be a coboundary and the local sections could be modified to yield a compatible family admitting a global section. Therefore, a nontrivial class $[c]$ precisely measures the failure of global integration. Interpreted neurocognitively, this corresponds to pathological dysfunction.
\end{proof}
$$ $$

\begin{theorem}[Cohomological Classification]
\footnote{The interpretation of $H^0$ as global sections and higher cohomology as obstructions is classical; see Mac~Lane and Moerdijk (1992) \cite{MacLaneMoerdijk} and Weibel (1994, Chapter~5) \cite{Weibel}.}
For a neural sheaf $\mathcal{F}$ on $\mathcal{B}$:
\begin{itemize}
\item $H^0(\mathcal{B},\mathcal{F})$ corresponds to normal global integration.
\item Nontrivial $H^k(\mathcal{B},\mathcal{F})$ for $k>0$ correspond to distributed or higher-order pathologies.
\end{itemize}
\end{theorem}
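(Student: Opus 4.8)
The plan is to treat the statement not as a fresh computation but as the assembly of two classical cohomological facts with the neurocognitive dictionary already fixed by the Global Brain Function and Pathological Obstruction theorems. First I would dispose of the degree-zero case. The governing identity is the canonical isomorphism
$$
H^0(\mathcal{B},\mathcal{F}) \cong \mathcal{F}(\mathcal{B}),
$$
valid for any sheaf on any topological space: by definition of the global-sections functor $\Gamma(\mathcal{B},-)$ and its left exactness, the zeroth cohomology is exactly the set of global sections. Composing this with the Global Brain Function theorem — which identifies a global section with a single integrated functional state — immediately yields the first bullet, so that a well-defined, nonvanishing element of $H^0$ is precisely a normally integrated global brain function.

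Second, I would obtain the higher-degree clause by generalizing the $H^1$ argument of the Pathological Obstruction theorem through the full \v{C}ech complex of an open cover $\{U_i\}_{i \in I}$. Here a degree-$k$ cochain assigns a section to each $(k+1)$-fold overlap $U_{i_0} \cap \cdots \cap U_{i_k}$, and the coboundary operator records the compatibility constraints imposed on $(k+2)$-fold overlaps. A nontrivial class in $H^k(\mathcal{B},\mathcal{F})$ is then an obstruction that genuinely lives at the level of $k$-fold intersections of functionally coherent subsystems and cannot be removed by adjusting data on lower overlaps. I would read this degree filtration as the formal signature of how localized a dysfunction is: $H^1$ registers pairwise incompatibilities between two subsystems, while higher $H^k$ registers distributed, higher-order failures that implicate simultaneous interaction among $k+1$ regions at once, establishing the second bullet.

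The main obstacle is interpretive rather than computational. The mathematical content of both clauses is standard — the identification of $H^0$ with global sections and the \v{C}ech description of higher obstructions are classical — so the real burden is to justify that the degree $k$ faithfully measures how distributed the associated pathology is. The difficulty is that cohomological degree can depend on the chosen cover and on the acyclicity of $\mathcal{F}$; absent further hypotheses, such as a good cover or a sufficiently fine resolution, the \v{C}ech groups need not agree with the derived-functor cohomology, and the neurocognitive reading of \emph{higher-order} risks becoming an artifact of the observer's functional decomposition. I would therefore devote most of the argument to isolating conditions under which the two theories coincide and to showing that, under those conditions, the degree of the obstruction is an invariant attached to $\mathcal{B}$ and $\mathcal{F}$ themselves rather than to a particular choice of cover.
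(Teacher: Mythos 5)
Your proposal takes essentially the same route as the paper: the first bullet is handled by the definitional identification $H^0(\mathcal{B},\mathcal{F})\cong\mathcal{F}(\mathcal{B})$ combined with the Global Brain Function theorem, and the second by reading nontrivial higher \v{C}ech classes as compatibility failures on multiple overlaps that cannot be repaired locally, which is exactly the paper's two-sentence argument in expanded form. Your closing caveat about cover-dependence and the need for good covers or acyclicity to identify \v{C}ech with derived-functor cohomology is a legitimate concern that the paper simply does not address, so on that point you are being more careful than the source rather than diverging from it.
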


\begin{proof}
By definition, $H^0(\mathcal{B},\mathcal{F})$ is the space of global sections. Higher cohomology groups measure successive failures of local-to-global compatibility across multiple overlaps. These failures cannot be resolved by local corrections alone and thus represent structural dysfunctions at increasing levels of integration.
\end{proof}

\subsection{Correspondence Table}


\begin{center}
\renewcommand{\arraystretch}{1.3}
\begin{tabular}{ll}

\textbf{Sheaf-Theoretic Concept} & \textbf{Philosophical / Neuroscientific Interpretation} \\ 
Open set $U \subseteq \mathcal{B}$ & Local neural or cognitive subsystem \\ 
Local section $s \in \mathcal{F}(U)$ & Local functional / semantic state \\ 
Restriction map & Inter-module consistency constraint (Morphism) \\ 
Global section & Integrated brain function / Unified consciousness \\ 
Sheaf gluing axiom & Leibnizian pre-established harmony \\ 
$H^0(\mathcal{B},\mathcal{F})$ & Normal global integration \\ 
$H^{k>0}(\mathcal{B},\mathcal{F})$ & Distributed or higher-order pathology (Obstruction) \\ 
\end{tabular}
\end{center}

This correspondence makes explicit how classical metaphysical notions and contemporary neuroscience are unified through sheaf-theoretic structure.

\section{From Open Covers to Cohomological Obstruction}

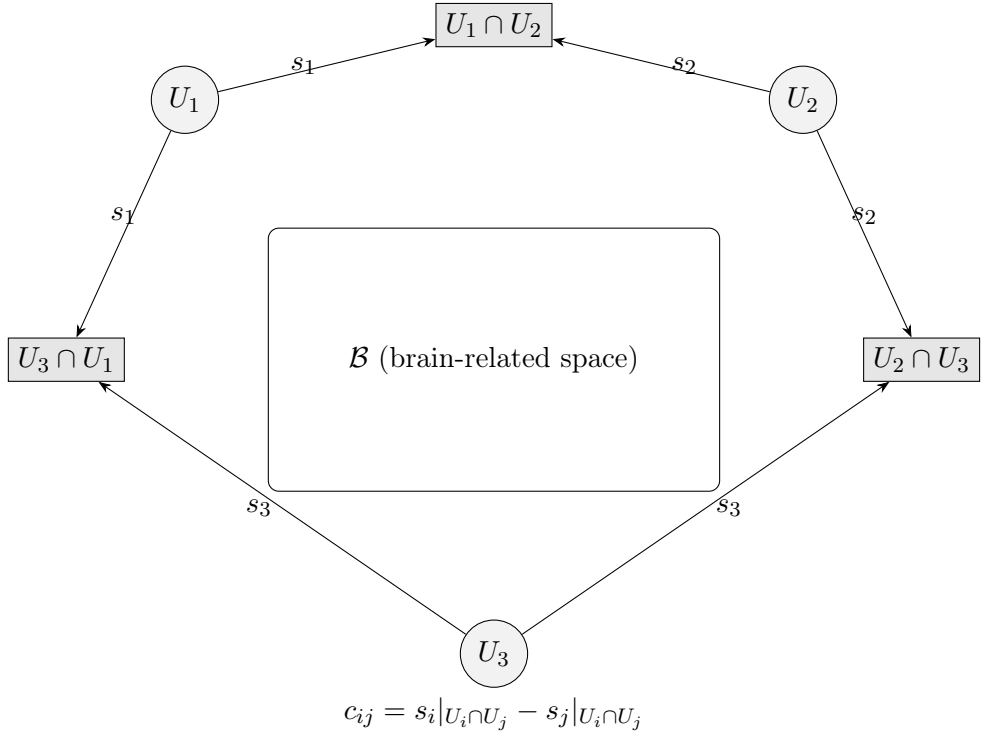
\begin{figure}[h]
\centering
\begin{tikzpicture}[node distance=2.8cm, >=Stealth]

\node (B) [draw, rounded corners, minimum width=6cm, minimum height=3.5cm] {$\mathcal{B}$ (brain-related space)};

\node (U1) [draw, circle, fill=gray!10, above left=of B, xshift=1.2cm, yshift=-0.6cm] {$U_1$};
\node (U2) [draw, circle, fill=gray!10, above right=of B, xshift=-1.2cm, yshift=-0.6cm] {$U_2$};
\node (U3) [draw, circle, fill=gray!10, below=of B, yshift=1.1cm] {$U_3$};

\node (O12) [draw, rectangle, fill=gray!20, above=of B, yshift=-0.4cm] {$U_1 \cap U_2$};
\node (O23) [draw, rectangle, fill=gray!20, right=of B, xshift=-0.9cm] {$U_2 \cap U_3$};
\node (O31) [draw, rectangle, fill=gray!20, left=of B, xshift=0.9cm] {$U_3 \cap U_1$};

\draw[->] (U1) -- node[left] {$s_1$} (O12);
\draw[->] (U2) -- node[right] {$s_2$} (O12);

\draw[->] (U2) -- node[above] {$s_2$} (O23);
\draw[->] (U3) -- node[right] {$s_3$} (O23);

\draw[->] (U3) -- node[left] {$s_3$} (O31);
\draw[->] (U1) -- node[above] {$s_1$} (O31);

\node (C) [below=of B, yshift=0.2cm] {$c_{ij} = s_i|_{U_i\cap U_j} - s_j|_{U_i\cap U_j}$};

\end{tikzpicture}
\caption{From an open cover $\{U_i\}$ of a brain-related space $\mathcal{B}$ to a \v{C}ech 1-cocycle. Local sections $s_i$ agree on overlaps in the healthy case. Pathology corresponds to a nontrivial cocycle obstructing global gluing.}\label{fig:open-cover-cocycle}
\end{figure}


\begin{figure}[htbp]
\centering
\begin{tikzpicture}[scale=1.05]

\definecolor{openset}{RGB}{100,160,220}
\definecolor{overlap}{RGB}{140,200,160}
\definecolor{cocycle}{RGB}{200,80,80}

\newcommand{\brainmanifold}[4]{
  \draw[thick]
    plot [smooth cycle, tension=0.8]
    coordinates {
      (#1-2,0) (#1-1,1.2) (#1+0.8,1.4)
      (#1+2,0.6) (#1+1.5,-1.2) (#1-0.5,-1.5)
    };

  \fill[openset, opacity=0.35] (#1-0.8,0.4) circle (0.9);
  \fill[openset, opacity=0.35] (#1+0.9,0.3) circle (0.9);
  \fill[openset, opacity=0.35] (#1+0.1,-0.6) circle (0.9);

  \fill[overlap, opacity=0.45] (#1+0.05,0.25) circle (0.45);
  \fill[overlap, opacity=0.45] (#1+0.45,-0.25) circle (0.45);

  #2

  \node at (#1, -2.3) {\small (#3) #4};

  \node[align=left, text width=4.8cm] at (#1, -3.3) {%
    \scriptsize
    \textcolor{openset}{Open sets $U_i$}: local domains\\
    \textcolor{overlap}{Overlaps $U_i\cap U_j$}: integration\\
    \textcolor{cocycle}{Arrows}: transitions / cocycles
  };
}

\begin{scope}[shift={(0,7)}]
\brainmanifold{0}{
  \draw[->, thick, cocycle] (-0.4,0.5) -- (0.4,0.5);
  \draw[->, thick, cocycle] (0.7,0.1) -- (0.3,-0.4);
  \draw[->, thick, cocycle] (-0.1,-0.4) -- (-0.4,0.2);
}{a}{Normal integration}
\end{scope}

\begin{scope}[shift={(0,0)}]
\brainmanifold{0}{
  \draw[->, thick, cocycle] (-0.4,0.5) -- (0.4,0.5);
  \draw[->, thick, cocycle, dashed] (0.7,0.1) -- (0.3,-0.4);
  \draw[->, thick, cocycle] (-0.1,-0.4) -- (-0.4,0.2);
}{b}{Local degeneration}
\end{scope}

\begin{scope}[shift={(0,-7)}]
\brainmanifold{0}{
  \draw[->, thick, cocycle] (-0.4,0.5) -- (0.4,0.5);
  \draw[->, thick, cocycle] (0.3,-0.4) -- (0.7,0.1);
  \draw[->, thick, cocycle, dashed] (-0.1,0.2) -- (-0.5,0.6);
}{c}{Global inconsistency}
\end{scope}

\begin{scope}[shift={(4.6,7.2)}]
  \draw[openset, fill=openset, opacity=0.35] (0,0) rectangle (0.4,0.3);
  \node[right] at (0.45,0.15) {\scriptsize open set $U_i$};

  \draw[overlap, fill=overlap, opacity=0.45] (0,-0.45) rectangle (0.4,-0.15);
  \node[right] at (0.45,-0.3) {\scriptsize overlap};

  \draw[->, cocycle, thick] (0,-0.8) -- (0.4,-0.8);
  \node[right] at (0.45,-0.8) {\scriptsize transition};

  \draw[->, cocycle, thick, dashed] (0,-1.15) -- (0.4,-1.15);
  \node[right] at (0.45,-1.15) {\scriptsize degeneration};
\end{scope}

\end{tikzpicture}
\caption{
Brain modeled as a manifold with an open cover.
(a) Consistent cocycle conditions yield coherent global integration.
(b) Local degeneration of transitions, corresponding to aphasia or agnosia.
(c) Global inconsistency of cocycles, corresponding to schizophrenia-like disorders.
}
\end{figure}



From $\{U_i\}$ of a brain-related space $\mathcal{B}$ to a \v{C}ech 1-cocycle. Local sections $s_i$ agree on overlaps in the healthy case. Pathology corresponds to a nontrivial cocycle obstructing global gluing.

$$ $$

\section{Clinical Case Studies and Literature Alignment}

We briefly align the abstract sheaf-theoretic classification with representative
clinical and neuroscientific literature, emphasizing that no new empirical claims
are made.

\paragraph{Aphasia.}
Classical disconnection accounts of aphasia (including Broca-type and conduction
aphasia) describe a loss or severe restriction of locally expressible linguistic
functions. In the present framework, this corresponds to a degenerate case in which
certain open sets admit no nontrivial local sections. This interpretation is
consistent with the classical neuropsychological synthesis of Geschwind
\cite{Geschwind1965}.

\paragraph{Agnosia.}
Visual and associative agnosias are typically analyzed as failures of perceptual
recognition despite preserved early sensory processing. Such cases correspond to
locally defined sections that are trivial or semantically impoverished, rather than
globally incompatible. This aligns with standard cognitive-neuropsychological
accounts such as Farah \cite{Farah2004}.

\paragraph{Schizophrenia.}
In contrast, schizophrenia is widely modeled as a disorder of integration rather
than local loss, most prominently in the dysconnection hypothesis. Here, local
cognitive contents persist but fail to cohere globally. This is naturally modeled
by a nontrivial cohomology class obstructing global sections, in line with
integrative accounts such as Friston \cite{Friston1998}.

\section{Leibnizian Interpretation}

\subsection{Methodological Remark}
Before introducing the correspondence table below in 8.2, it is important to clarify the methodological status of the present framework. The sheaf-theoretic modeling proposed in this paper is neither a reduction of neurobiological phenomena to abstract mathematics, nor a merely metaphorical analogy between brain function and mathematical structure. Rather, it should be understood as a form of semantic modeling in the sense of philosophy of science.

In this approach, mathematical objects—such as topological spaces, sheaves, and cohomology groups—do not purport to describe the material substrate of the brain directly. Instead, they provide a precise language for articulating relations of dependence, compatibility, and integration among locally defined functional states. The distinction between local and global sections is thus not ontological but structural: it captures how partial cognitive or neural perspectives may or may not integrate into a unified functional whole.

Accordingly, the Leibnizian vocabulary introduced here should not be read as a historical or doctrinal claim about Leibniz's metaphysics. The notions of monads and pre-established harmony function instead as conceptual lenses that illuminate the structural role played by local sections and gluing conditions. The philosophical value of this analogy lies in its ability to clarify what kind of unity is at stake in global brain function—namely, unity without direct causal fusion, but achieved through systematic compatibility constraints.

Seen in this light, the correspondence table below is not a claim of identity between neuroscience and metaphysics, but a controlled translation between mathematical structure and interpretative vocabulary. This translation is intended to facilitate interdisciplinary dialogue while preserving formal rigor.

Local sections are interpreted as monadic perspectives, while the sheaf gluing axiom formalizes pre-established harmony. Pathology corresponds to the mathematical failure of harmony, captured cohomologically.

\subsection{Leibnizian Interpretation: Monads, Atlases, and Brain}

In Leibniz's Monadology \cite{Leibniz1714}, reality consists of monads—self-contained entities providing partial perspectives on the whole. We interpret neural subsystems as monad-like units, each encoding a local representation of the brain's state.

From a geometric perspective, these monads correspond to local charts on the neural manifold. The brain itself forms an atlas integrating these charts via transition functions, while sheaf restriction maps ensure consistency across overlaps.

Brain pathology arises when these transition functions fail or when local sections cannot be coherently glued, resulting in cohomological obstructions. Thus, Leibnizian monads, manifold atlases, and sheaf semantics converge into a unified framework for brain interpretation.

\begin{theorem} (Monadological Interpretation).
Let each local chart of the neural manifold be interpreted as a monad in the Leibnizian
sense. Then the sheaf-theoretic gluing condition formalizes the principle of
pre-established harmony as global coherence of sections.
\end{theorem}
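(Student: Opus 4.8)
The plan is to treat this statement not as a deductive theorem in the usual sense but as a structural identification, so the ``proof'' will consist of fixing a precise dictionary between Leibnizian and sheaf-theoretic vocabulary and then verifying that, under this dictionary, the gluing axiom discharges exactly the work that pre-established harmony is meant to do. First I would fix the correspondence on objects: to each open set $U_i \subseteq \mathcal{B}$ (a local chart of the neural manifold) I assign a monad, and to each local section $s_i \in \mathcal{F}(U_i)$ I assign that monad's \emph{perception}, i.e.\ its partial representation of the whole from its own standpoint. The global state to be recovered is a single section $s \in \mathcal{F}(\mathcal{B})$, which plays the role of the one harmonized world that every monad expresses.

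Next I would isolate the two features of Leibnizian monads that carry the argument: that monads are \emph{windowless}, admitting no direct causal traffic between one another, and that their states are nonetheless mutually consistent by harmony fixed in advance rather than negotiated. I would translate the first feature by observing that the sheaf provides no morphism $\mathcal{F}(U_i) \to \mathcal{F}(U_j)$ directly; the only arrows available are the restriction maps to a common overlap, so any agreement between $s_i$ and $s_j$ is mediated solely by $U_i \cap U_j$ and never by one monad acting on another. I would then translate the second feature by reading the compatibility condition
$$
s_i|_{U_i \cap U_j} = s_j|_{U_i \cap U_j} \quad \text{for all } i,j \in I
$$
as a pre-established, structural constraint: it is imposed on the family at once, not produced by interaction between its members.

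With the dictionary in place, the substantive step is to invoke the Global Brain Function theorem, which already established that a family of local sections agreeing on overlaps glues to a unique $s \in \mathcal{F}(\mathcal{B})$ with $s|_{U_i} = s_i$. Under the correspondence this says precisely that the harmonized family of monadic perceptions determines, and is determined by, a single coherent global state—the formal counterpart of the claim that the mutual accord of monads expresses one world. The uniqueness clause matters here, since it encodes the Leibnizian thesis that there is exactly one harmonized whole the monads agree upon, not several.

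I expect the main obstacle to be one of fidelity rather than calculation: the force of the statement rests entirely on whether the absence of direct cross-restriction maps genuinely captures the windowless, non-causal character of harmony, and whether the overlap-compatibility condition is a fair formal surrogate for \emph{pre-established} (as opposed to merely \emph{achieved}) accord. I would therefore devote the delicate part of the argument to defending these two identifications, flagging explicitly—consistent with the methodological remark above—that the result is a controlled translation securing conceptual coherence, not a reduction of one domain to the other.
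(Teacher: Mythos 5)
Your proposal is correct and follows essentially the same route as the paper's own proof sketch: establish the dictionary (open set $\leftrightarrow$ monad, local section $\leftrightarrow$ monadic perception, global section $\leftrightarrow$ the one harmonized world) and then let the gluing axiom do the work of pre-established harmony. Your two refinements—reading windowlessness as the absence of direct maps $\mathcal{F}(U_i)\to\mathcal{F}(U_j)$ other than restriction to overlaps, and reading the uniqueness clause as the thesis that there is exactly one harmonized whole—are welcome elaborations the paper leaves implicit, but they do not change the argument.
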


Proof [Proof Sketch]
\medskip

In Leibnizian monadology, each monad provides a local, self-contained perspective on the
world, while global order arises not through direct interaction but through harmony among
these perspectives. Analogously, in the sheaf-theoretic framework, each local chart or open
set supports local sections representing functional states.

The sheaf gluing condition ensures that when these local sections are mutually compatible,
they give rise to a unique global section. This mirrors the notion of pre-established
harmony, where global coherence emerges from local consistency without direct causal
exchange. Failures of gluing correspond to breakdowns of harmony, which, in the present
framework, are identified with pathological conditions. $\Box$

\subsection{Connection to Leibnizian Monadology}

This theorem naturally incorporates the following correspondence with the principles of Monadology:

\begin{center}
\renewcommand{\arraystretch}{1.3}
\begin{tabular}{ll}

\textbf{Sheaf Theory} & \textbf{Leibnizian Monadology} \\ 
Local Section & Local representation within a Monad \\ 
Compatibility & Pre-established Harmony (\textit{Harmonie préétablie}) \\ 
Global Section & Coherent representation of the World \\ 
\end{tabular}
\end{center}

\noindent
In this framework, the mathematical requirement for local data to be compatible mirrors the Leibnizian concept of a "Pre-established Harmony," where individual monads reflect the same universe from their own perspectives.

\subsection{Discussion: The Brain as a Machine of Harmony}

The proposed framework invites a fundamental shift in our understanding of neural architecture, transitioning from a view of the brain as a mere signal processor to a \textbf{Leibnizian Machine of Harmony}.

\subsubsection{The Mathematical Restoration of the Monad}
In classical Leibnizian metaphysics, the Monad is "windowless," reflecting the universe from an internal logic. In our sheaf-theoretic model, this is mirrored by the \textbf{Local Section} over an open set $U_i$. While neural modules (visual, auditory, or motor) may appear functionally isolated, they are bound by the \textbf{Sheaf Condition}. This condition acts as the modern successor to \textit{Pre-established Harmony}: it is the mathematical necessity that local "realities" must coincide on their intersections to permit a global existence.

\subsubsection{Consciousness as Global Coherence}
We posit that \textbf{Consciousness} is not a localized "hub" but the realization of a \textbf{Global Section}. When the brain satisfies the gluing axioms across its entire manifold $\mathcal{B}$, a unified phenomenal field emerges. This provides a rigorous answer to the binding problem: the "unity of experience" is the topological success of $H^0(\mathcal{B}, \mathcal{F})$.

\subsubsection{Topological Diagnosis of the Self}
By defining pathology as a \textbf{Cohomological Obstruction}, we move beyond descriptive psychiatry into topological ontology. 
\begin{itemize}
    \item A "healthy" mind is one where $H^{k>0} = 0$, meaning every local thought is part of a possible global whole.
    \item A "pathological" mind, such as in schizophrenia, possesses a non-trivial $H^1$. This implies a "loop" of local consistency that is globally paradoxical—a structural discord where the harmony of the monads is broken.
\end{itemize}

\subsubsection{Concluding Remark}
Ultimately, the brain's primary function is the \textbf{computation of a Global Section}. It is a biological apparatus designed to solve the problem of harmony. When we think, perceive, or exist, we are essentially "gluing" the fragmented, monadic perspectives of our neural subsystems into a single, consistent world.

\section{Summary}
This paper proposes a formal and conceptual framework for describing global integration and its breakdown based on sheaf-theoretic structures.
The brain is idealized as a manifold equipped with an open cover, where local functional domains are represented as sections, and their coordination is governed by gluing and cocycle conditions.
From this perspective, coherent cognitive integration corresponds to the existence of compatible global sections, while local degeneration or global inconsistency of the sheaf structure provides a structural model for brain pathologies such as aphasia, agnosia, and schizophrenia.
The framework is developed independently of historical doctrines; however, it can be applied to reinterpret Leibniz’s notion of pre-established harmony as a condition of global coherence arising from locally consistent structure.
Rather than offering an empirical theory of neural processes, the paper aims to articulate a sheaf-theoretic mode of structural explanation for cognitive unity and its pathological disruption.

\section{Conclusion}
This paper began with a conceptual question motivated by recent developments in systems neuroscience: what is the theoretical status of neural manifolds, and how should the relation between local neural function and global cognitive integration be understood? While empirical studies have successfully identified low-dimensional structure in neural population activity, the conceptual and semantic implications of such structure have remained insufficiently articulated.

By introducing a sheaf-theoretic framework over brain-related spaces, we have proposed a precise answer to this question. Locally defined neural or cognitive functions are modeled as sections over open subsets, while globally coherent brain function corresponds to the existence of a global section obtained via gluing. This formulation makes explicit the structural conditions under which integration is possible, and—crucially—the ways in which it can fail.

From this perspective, brain pathology is not treated merely as localized dysfunction or noise, but as a mathematically identifiable obstruction to global integration. Sheaf cohomology provides a principled means of classifying such obstructions, thereby distinguishing normal integration from distributed or higher-order pathologies. The resulting picture reframes classical clinical phenomena in terms of local-to-global incompatibility rather than isolated deficits.

Philosophically, returning to the initial question, neural manifolds emerge not simply as geometric artifacts of data analysis but as carriers of semantic structure. The Leibnizian interpretation offered here—understood methodologically rather than historically—clarifies how unity can arise from multiplicity without reduction, through compatibility constraints rather than causal fusion. In this sense, the sheaf-theoretic notion of gluing functions as a formal analogue of pre-established harmony.

We conclude by emphasizing that the present framework is not a final theory of brain function, but a conceptual scaffold. Its value lies in providing a common language in which empirical neuroscience, mathematics, and philosophy can meaningfully interact. Future work may refine the choice of brain-related spaces, enrich the categorical structure of neural sheaves, or connect cohomological invariants more directly with empirical measures of dysfunction. What remains constant is the guiding question to which we now return: how local neural perspectives are able—or fail—to constitute a globally coherent mind.

The philosophical orientation of this paper is continuous with the author’s related work \cite{Inoue2026, Inoue2026a, Inoue2026b}, in which similar structural concerns are explored from complementary perspectives. Together, through the notion of name and meaning, these works are related to articulating a unified conceptual framework for understanding coherence, integration, and their failure across logical, mathematical, and cognitive domains.

\appendix

\section{Appendix: A Čech Cohomology Calculation for Neural Sheaves}

\noindent\emph{No originality is claimed in this calculation.}\

This appendix provides a minimal and self-contained illustration of how a Čech cohomology class arises as an obstruction to global integration, complementing the proofs in the main text.

Let $\mathcal{B}$ be a brain-related space covered by two open sets $U$ and $V$ such that $\mathcal{B}=U\cup V$ and $U\cap V\neq\emptyset$. Let $\mathcal{F}$ be a neural sheaf on $\mathcal{B}$.

Assume that we are given local sections
$$
s_U\in\mathcal{F}(U), \qquad s_V\in\mathcal{F}(V).
$$
On the overlap $U\cap V$, define the Čech 1-cochain
$$
c_{UV} = s_U|_{U\cap V} - s_V|_{U\cap V} \in\mathcal{F}(U\cap V).
$$

If $c_{UV}=0$, then the two local sections agree on the overlap and, by the sheaf gluing axiom, define a global section $s\in\mathcal{F}(\mathcal{B})$. In this case, the corresponding cohomology class vanishes.

If $c_{UV}\neq 0$, then no global section exists that restricts to both $s_U$ and $s_V$. The element $c_{UV}$ defines a nontrivial Čech 1-cocycle and hence a nonzero class in
$$
H^1(\mathcal{B},\mathcal{F}).
$$

This nontrivial cohomology class represents a genuine obstruction to global integration. In the interpretation of the present paper, it corresponds to a structural brain pathology: the failure is not localized in either $U$ or $V$ alone, but arises from their incompatibility.

This elementary calculation generalizes to higher-degree cohomology for more refined open covers and provides the mathematical basis for the classification of pathologies proposed in Section 7.

\section{Mathematical Mechanism of $H^1$ in Schizophrenia}

Schizophrenia is characterized by a "fragmented reality" where local perceptions fail to coalesce into a global truth. In Sheaf Theory, this is precisely the presence of a non-trivial \textbf{1-cocycle}.

Consider a collection of neural modules $\{U_i\}$. A patient may possess local sections $s_i \in \mathcal{F}(U_i)$ that are internally consistent. However, the transition functions $g_{ij} = s_i - s_j$ on the overlaps $U_i \cap U_j$ may fail to satisfy the \textbf{vanishing condition of the first cohomology group}:
\begin{equation}
    \sum_{cycles} g_{ij} \neq 0 \implies H^1(\mathcal{B}, \mathcal{F}) \neq 0
\end{equation}
This non-zero value represents a \textbf{topological hole} in the cognitive space. The "looping inconsistency" explains why paradoxical beliefs can be locally maintained but globally irreconcilable, creating the characteristic "split" in the unified self ($H^0$).

$$ $$
$$ $$

\noindent Takao Inou\'{e}

\noindent Faculty of Informatics

\noindent Yamato University

\noindent Katayama-cho 2-5-1, Suita, Osaka, 564-0082, Japan

\noindent inoue.takao@yamato-u.ac.jp
 
\noindent (Personal) takaoapple@gmail.com (I prefer my personal mail)

\bigskip

\end{document}